\newtheorem{theorem}{Theorem}
\begin{document}
\newcommand{\eqref}[1]{(\ref{#1})}
\title{The First Exit Time Statistics and the Entropic Forces in Single File Diffusion}

\author{Alessio Lapolla}

\address{Santa Fe Institute, 1399 Hyde Park Road, Santa Fe, NM, 87501 USA}
\ead{alessiolapolla@santafe.edu}
\vspace{10pt}
\begin{indented}
\item[]
\end{indented}

\begin{abstract}
Single file systems are simplified models to study effectively one-dimensional physical systems. Here we compute analytically the complete first exit time statistics for an ideal overdamped single file with absorbing boundary conditions. Then we use these results to study the speed-accuracy trade-off characterizing this observable in terms of the means square displacement and the entropic repulsive forces of the system.
\end{abstract}

\section{Introduction}
The dynamics of many systems is de facto uni-dimensional. Gene regulation along DNA~\cite{li_effects_2009, ahlberg_many-body_2015}, transport in zeolites~\cite{chou_entropy-driven_1999}, biological channel~\cite{hummer_water_2001}, superionic conductors~\cite{richards_theory_1977}, and the dynamics of colloids~\cite{herrera-velarde_superparamagnetic_2008, lutz_single-file_2004, taloni_single_2017, locatelli_single-file_2016} are examples in which the dimensional constraints on the motion of the particles, and the subsequent impossibility or difficulty in overtaking each other, are well described by single file models.
\begin{figure}[h!]
    \centering
    \includegraphics[width=0.9\textwidth]{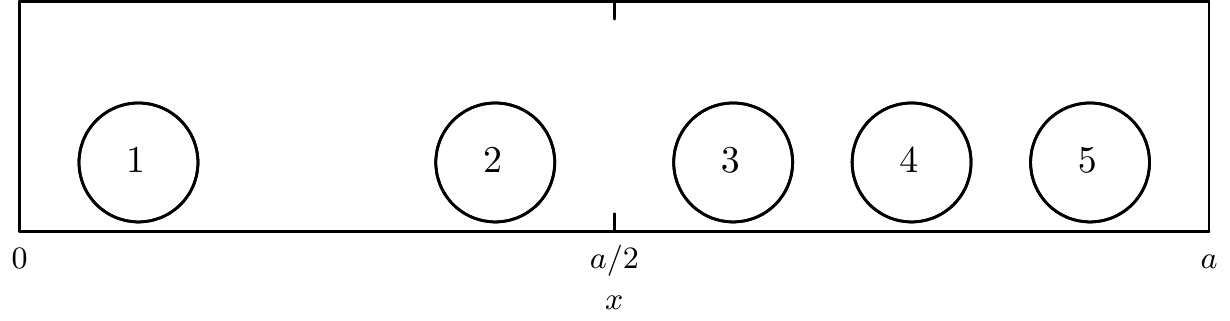}
    \caption{Schematic of a diffusing single file. The left boundary can be both absorbing or reflective, while the right one is always absorbing. The particles are enumerated from left to right. The diffusing process ends when a particle hits an absorbing boundary.}
    \label{fig:Picture}
\end{figure}
These theoretical models describe the diffusing one-dimensional dynamics of particles (or rods) subject to hard-core interactions (though other repulsive potentials have been analyzed in the literature as well~\cite{poncet_generalized_2021}) that force the particles to be ordered at all times. The first successful tentatives to approach the problem analytically have been done more that fifty years ago by Jepsen~\cite{jepsen_dynamics_1965} and Harris~\cite{harris_diffusion_1965} for very large or infinite systems. Remarkably Harris found that the dynamics of a tagged particle in these systems is subdiffusive, i.e. the mean square displacement (MSD) of a tagged particle scales as $\sim\sqrt{t}$ and not linearly as prescribed by the Einstein relation. In the subsequent decades a big effort has been put in place to study various aspects of the system: under the lens of anomalous diffusion~\cite{barkai_theory_2009, barkai_diffusion_2010,fouad_anomalous_2017, leibovich_everlasting_2013, kollmann_single-file_2003}, that as been confirmed for single file system of finite size as well~\cite{lizana_single-file_2008, lizana_diffusion_2009}, its large deviations properties~\cite{krapivsky_large_2014, krapivsky_dynamical_2015}, its connection with the fractional Brownian motion~\cite{lizana_foundation_2010}, its ageing dynamics~\cite{metzler_ageing_2014}, its memory properties~\cite{lapolla_unfolding_2018}, and also the setups in which the tracked particle is driven or active have been studied~\cite{locatelli_active_2015, ryabov_single-file_2011}. In this article we focus on single file models with a limited number of particles subject to at least one absorbing boundary diffusing in a finite interval. First passage time problems in single files have been the object of study for both infinite\cite{sanders_first_2012, ryabov_single-file_2011, ryabov_survival_2012} and finite systems~\cite{rodenbeck_calculating_1998, ryabov_single-file_2013, locatelli_single-file_2016}. In the study of many-body systems with absorbing boundary conditions the hitting time related to the absorption of a particle can be defined in several ways: i) the hitting time can be determined by a specific tagged particle assuming that other particles can be absorbed~\cite{ryabov_single-file_2013}, ii) it can be given by the time at which the last particle is absorbed~\cite{locatelli_single-file_2016}, iii) it can be designated by the time in which a tagged particle is absorbed assuming that the other particles are not affected by the absorbing boundary condition~\cite{kantor_anomalous_2007}, or iv) it can be set by the first particle hitting the boundary~\cite{grebenkov_single-particle_2020}. In this article we will concentrate on the last case, that is: when the earliest particle is absorbed the entire process stops. After the mathematical outline of the model, we will obtain analytically the first exit time probability density and its moments assuming constrained initial conditions. We will use these results to make a connection between the variance of the first exit time and the MSD of the tagged particle, and to comment on how these observables are influenced by the entropic forces characterizing the system.

\section{The single file model and the first exit statistics}
The overdamped dynamics of a single file of $N$ elements with diffusion coefficient $D$ in a interval of length $a$ (see Fig.~\ref{fig:Picture} for a schematic representation) can be described using the following Fokker-Plank equation:
\begin{equation}
    \left\{\partial_t-D\sum_{i=1}^N\partial_{x_i}^2\right\}G(\mathbf{x},t|\mathbf{x}_0)=\delta(\mathbf{x}-\mathbf{x}_0),
  \label{Markovian-FPE}
 \end{equation}
 where the Dirac's delta specifies the initial position of the $N$ particles. The non-crossing conditions are defined by the zero-flux equations 
 \begin{equation}
    \left(\partial_{x_{i+1}}-\partial_{x_i}\right)G(\mathbf{x},t|\mathbf{x}_0)\left|_{x_{i+1}=x_i}\right.=0
  \quad \forall t;\\
 \end{equation}
 while the $N$th particle must satisfy the absorbing boundary condition
 \begin{equation}
     G(\mathbf{x},t|\mathbf{x}_0)\left|_{x_N=a}\right.=0,
 \end{equation}
 the first particle can satisfy either a reflective
 \begin{equation}
     \partial_{x_1}G(\mathbf{x},t|\mathbf{x}_0)\left|_{x_1=0}\right.=0
 \end{equation}
 or an absorbing boundary condition
  \begin{equation}
     G(\mathbf{x},t|\mathbf{x}_0)\left|_{x_1=0}\right.=0.
 \end{equation}
 In order to lighten the notation we consider point particles, the case of finite size rods is equivalent up to a linear rescaling~\cite{lizana_single-file_2008}.\\
 
 The solution of these many-body problems can be written in the form of a series expansion~\cite{gardiner_c.w._handbook_1985, lapolla_bethesf_2020} involving the eigenfunctions\footnote{The two problems considered in this article are hermitian, therefore the left and right eigenfunctions are identical.} $\Psi_\mathbf{k}(\mathbf{x})$ and the eigenvalues $\Lambda_\mathbf{k}$. These elements can be expressed in terms of the eigenfunctions $\psi_{k}(x)$ and the eigenvalues $\lambda_{k}$ solving the respective single particle problems (see~\ref{single particle}). Such that the Green's function (or propagator) that solves Eq.~\eqref{Markovian-FPE} reads:
 \begin{eqnarray}
   &G(\mathbf{x},t|\mathbf{x}_0)=\sum_{\mathbf{k}} \Psi_\mathbf{k}(\mathbf{x})\Psi_\mathbf{k}(\mathbf{x}_0)\mathrm{e}^{-\Lambda_\mathbf{k}t},\\
   &\Psi_\mathbf{k}(\mathbf{x})=\hat{O}_\mathbf{x}\sum_{\mathbf{\pi_k}} \prod_{i=1}^N \psi_{k_i}(x_i),\\
   &\Lambda_\mathbf{k}=\sum_{i=1}^N \lambda_{k_i},
 \end{eqnarray}
 where the zero-flux conditions are taken into account via the ordering operator $\hat{O}_\mathbf{x}$, ensuring, at every time, that $x_1\leq x_2\leq\cdots\leq x_N$; while $\sum_{\mathbf{\pi_k}}$ denotes the sum over all the permutations of the multiset $\mathbf{k}$ of the single particle's eigenvalues $k_i$.\\
The survival function for the entire system, i.e. the probability that there are $N$ particles in the interval $\Arrowvert0,a\arrowvert$ or $\arrowvert0,a\arrowvert$\footnote{Where the simbols $\Arrowvert$ and $\arrowvert$ denote a reflecting or absorbing boundary condition respectively.} beyond time $t$, is given by~\cite{redner_guide_2007}:
\begin{equation}
    S(t,\mathbf{x}_0)=\int_0^a \rmd \mathbf{x} G(\mathbf{x},t|\mathbf{x}_0)=\sum_{\mathbf{k}} \Phi_\mathbf{k}\Psi_\mathbf{k}(\mathbf{x}_0)\mathrm{e}^{-\Lambda_\mathbf{k}t},
    \label{survival function}
\end{equation}
where the integral $\int_0^a \rmd \mathbf{x}$ is over the hyperconic configuration space of the single file problem.
The $\Phi_\mathbf{k}$ factors can be easily computed using the single particle eigenfunctions 
\begin{equation}
    \Phi_\mathbf{k}=N! \prod_{i=1}^N \int_0^a \rmd x \psi_{k_i}(x),
    \label{survival weights}
\end{equation}
where the factorial exploits the exchange symmetry of the system and avoids the costly sum over all the permutations of $\mathbf{k}$, that is to say we can treat each particle as independent and then account for all the possible equivalent rearrangements. The first exit time probability can be readily computed considering~\cite{redner_guide_2007}
\begin{equation}
    F(t,\mathbf{x}_0)=-\frac{\partial S(t,\mathbf{x}_0)}{\partial t}.
    \label{def first exit}
\end{equation}

The previously obtained results require the ability of control the initial positions of all the particles. This could be both very challenging to test experimentally and hard to interpret theoretically. In this paper we will focus on a single parameter initial condition. The $\mathcal{T}$th particle is constrained in a specific initial position $x_{\mathcal{T} 0}$, while the initial positions of the $N_L$ particles to the left and the $N_R$ particles to the right are uniformly distributed between $(0,x_{\mathcal{T} 0})$ and $(x_{\mathcal{T} 0},a)$ respectively.
In this setting the survival function can be obtained computing the marginal of Eq.~\eqref{survival function}:
\begin{equation}
    S(t,x_{\mathcal{T}0})=\int d\mathbf{z} \delta(z_\mathcal{T}-x_{\mathcal{T} 0}) S(t,\mathbf{z}).
\end{equation}
In order to solve the previous integral we must compute the following integrals:
\begin{equation}
  V_{\mathbf{k}}(x_{\mathcal{T} 0})=\int d\mathbf{z} \delta(z_\mathcal{T}-x_{\mathcal{T} 0}) \Psi_\mathbf{k}(\mathbf{z}).
  \label{marginalization}
 \end{equation}
The computation of the $V_{\mathbf{k}}$ terms is in principle expensive due to the presence of all the permutations of $\mathbf{k}$, however using the "extended phase space integration" method~\cite{lizana_diffusion_2009}, it is possible to solve these integrals using formulae involving uniquely the single particle eigenfunctions~\cite{lapolla_unfolding_2018, lapolla_bethesf_2020}, thus the solution to the previous integrals are:
\begin{equation}
   V_{\mathbf{k}}(x_{\mathcal{T} 0})=\frac{N!}{(\mathcal{T}-1)!(N-\mathcal{T})!}\sum_{\{\mathbf{k}\}}\psi_{k_\mathcal{T}}(x_{\mathcal{T} 0})\prod_{j=1}^{\mathcal{T}-1}L_j(x_{\mathcal{T} 0})\prod_{j=\mathcal{T}+1}^N R_j(x_{\mathcal{T} 0}),
   \label{weights}
 \end{equation}
 where
\begin{eqnarray}
    &L(x)=\int_0^x dz \psi_k(z),\\
    &R(x)=\int_{x}^a dz \psi_k(z).
\end{eqnarray}
Therefore the survival and first exit time functions read:
\begin{eqnarray}
    &S(t,x_{\mathcal{T}0})=P_0(x_{\mathcal{T}0})^{-1}\sum_{\mathbf{k}} \Phi_\mathbf{k}V_\mathbf{k}(x_{\mathcal{T}0})\mathrm{e}^{-\Lambda_\mathbf{k}t},\\
    &F(t,x_{\mathcal{T}0})=P_0(x_{\mathcal{T}0})^{-1}\sum_{\mathbf{k}} \Lambda_\mathbf{k} \Phi_\mathbf{k}V_\mathbf{k}(x_{\mathcal{T}0})\mathrm{e}^{-\Lambda_\mathbf{k}t}.
\end{eqnarray}
where 
\begin{equation}
    P_0(x)=\frac{N!}{(N-\mathcal{T})!(\mathcal{T}-1)!a^N}x^{(\mathcal{T}-1)}(a-x)^{N-\mathcal{T}},
\end{equation}
is the normalization that correctly takes into account the initial condition~\cite{lizana_diffusion_2009}.\\
Clearly all expectation moments for any first exit time probability density function are easily computable via direct integration:
\begin{equation}
    \langle t^n\rangle_{x_{\mathcal{T}0}} =\int_0^\infty \rmd t t^n F(t,x_{\mathcal{T}0})=n!P_0(x_{\mathcal{T}0})^{-1}\sum_{\mathbf{k}} \Lambda_\mathbf{k}^{-n} \Phi_\mathbf{k}V_\mathbf{k}(x_{\mathcal{T}0})\mathrm{e}^{-\Lambda_\mathbf{k}t}.
\end{equation}
 
 \section{First exit time analysis}
 It has been shown that often the mere study of the mean first passage time is insufficient to give the entire picture of the process~\cite{godec_first_2016, mattos_first_2012}, thus the study of higher moments and/or the entire probability distribution can be necessary in order to understand the physical system at hand. This has been shown to be particularly important when the dependence on the initial condition is non trivial~\cite{mattos_first_2012}.
 \begin{figure}
     \centering
     \includegraphics{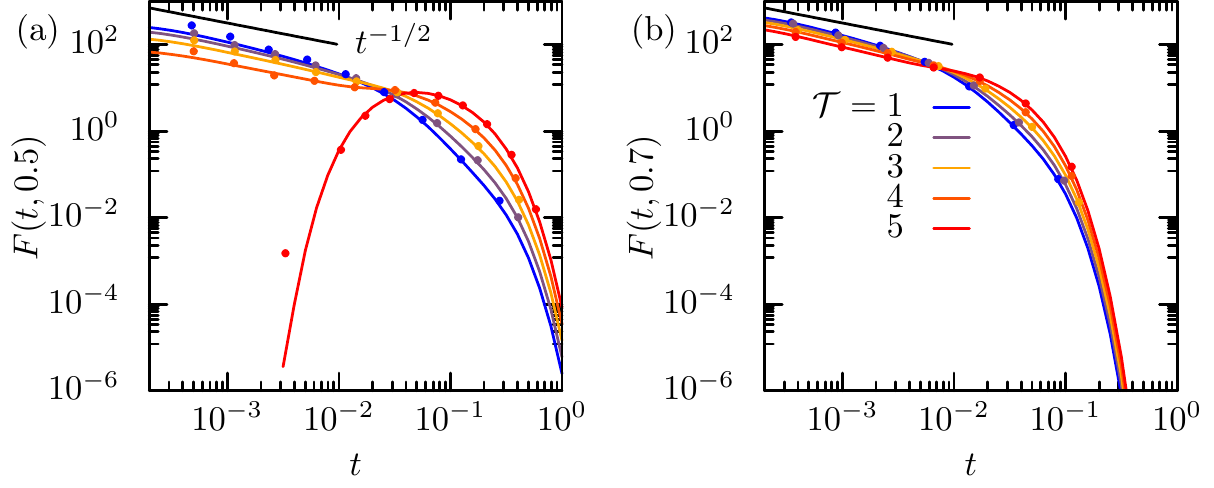}
     \caption{In this figure we show the first exit time probability density for a single file of $5$ particles diffusing in a interval of unitary length and unitary diffusion coefficient. In panel (a) we consider a single absorbing boundary condition while in panel (b) both ends of the interval are absorbing. The numbers labeling the different lines indicate which particle is constrained at $t=0$. For short times the power law decay is clearly visible in both panels. The dots in these panels are the results obtained simulating the process for $500000$ trajectories with a time-step of $10^{-5}$, the simulation algorithm can be found in~\cite{lapolla_bethesf_2020, lapolla_httpsgitlabcomsantafe1singlefileabsorbing_2022}.}
     \label{fig:FPT}
 \end{figure}
 The first exit time density for constrained initial conditions is showed in Fig.~\ref{fig:FPT}. For a single absorbing boundary conditions, the case in which the $N$th particle is constrained initially is qualitatively different from the case in which any of the other particles is. In the former situation the function has the "classic" shape of a first passage time probability density function: it is negligibly small at short times, then increases rapidly to a maximum, to then decay exponentially to zero as the time goes by. The short time behavior is a reflection of the finite time necessary for the last particle to cover the distance between the initial position and the absorbing boundary, this part describes the few trajectories that basically move directly towards the cliff. On the other hand the long time behaviour is given by the few very long trajectories that take a very long time to escape. Conversely, the case in which any of the other particles is constrained initially shows a power-law decay ($\sim t^{-1/2}$) at very short times. This initial trend is due to the fact that the $N$th particle initial position can be arbitrarily close to the absorbing edge, and so the stopping time of the process is analogous to the one of a simple Brownian particle without any interaction. In fact, in Fig.~\ref{fig:FPT}(a) is visible how in the case in which a particle more to the left (the first for example) is constrained initially the process is more likely to finish earlier than in the case in which a particle more to the right (e.g. the fourth) is locked initially. The reason why is that the last particle starts, on average, closer to the boundary since more particles must be accommodated in the interval $(x_0,a)$ initially. This can be shown explicitly by looking at the mean exit time as a function of the initial position of the constrained particle (Fig.~\ref{fig:Var}(a)) that increases monotonically from the leftmost to the rightmost particle (and obviously decreases monotonically as the initial position approaches the cliff).
 Conversely, if both boundaries are absorbing (Fig.~\ref{fig:FPT}(b)) the shape of $F(t,x_{\mathcal{T}0}$) does not present any special case. Even when the first or the last particle is constrained, the last or the first can start arbitrarily close to the other boundary\footnote{Note that in this case it is not possible to know which particle reaches the boundaries the earliest, since we are not computing any conditional first exit time probability.}.\\
 A more interesting phenomenology can be found comparing the mean and the variance of the processes we are examining.
 \begin{figure}
     \centering
     \includegraphics{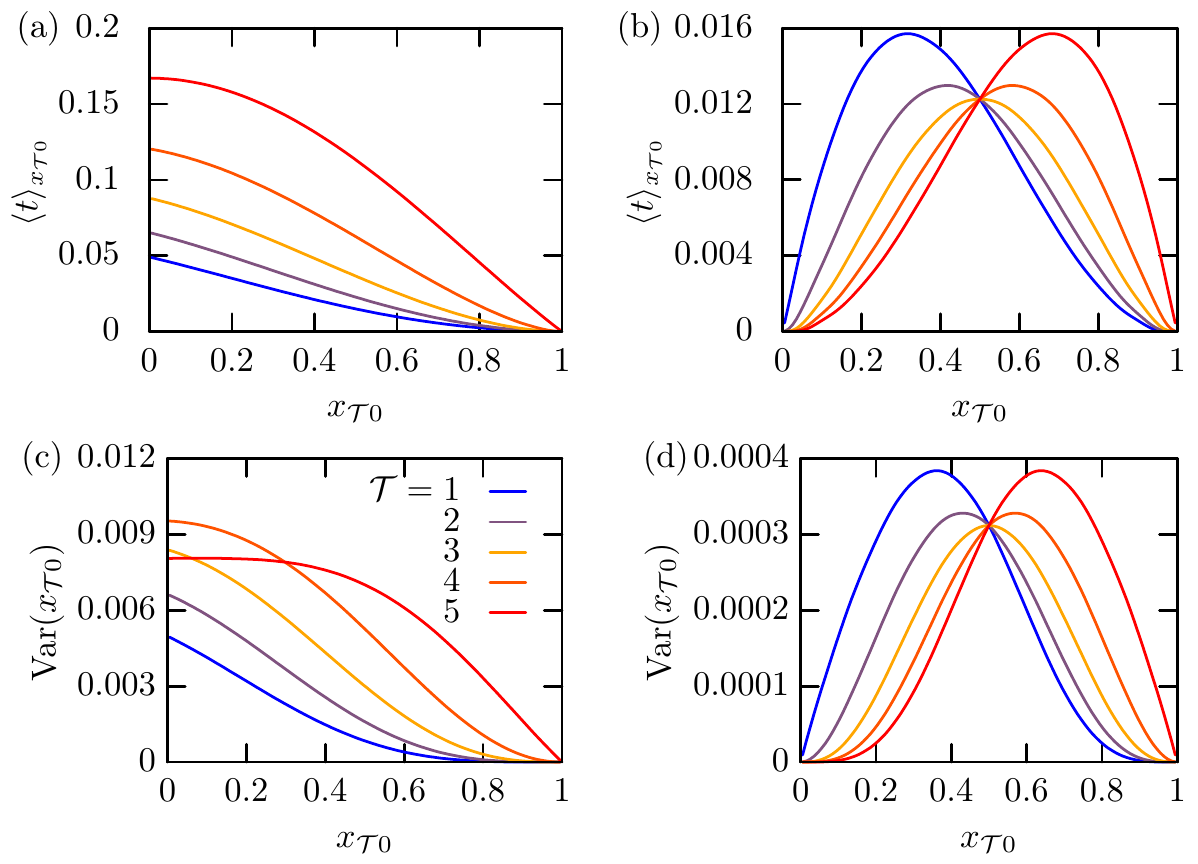}
     \caption{In the panels (a) and (c)  we show the first exit time mean and variance for a single file of $5$ elements diffusing in unitary interval and $D=1$ assuming one absorbing boundary, while in panels (b) and (d) we have both boundaries absorbing. Each number in the legend labels the particle that is constrained at $t=0$.}
     \label{fig:Var}
 \end{figure}
 When both ends of the diffusing interval are absorbing, the variance in Fig.~\ref{fig:Var}(d) matches the behavior of the mean in Fig.~\ref{fig:Var}(b). Both plots are specular around the middle point of the diffusing interval. Intuitively when the first particle starts closer to $0$ it takes on average more time to the system to lose one particle compared to the case in which the last particle is constrained in the same position, since the first particle initial position is very close to a boundary. The situation is reversed on the opposite side of the interval.
 Instead, the results on the interval $\Arrowvert 0,a\arrowvert$ presented in Fig.~\ref{fig:Var}(a) and (c) are less intuitive. The red line, indicating the case in which the last particle is constrained initially, crosses the lines referring to the situations in which the fourth and third particle are constrained initially in Fig.~\ref{fig:Var}(c), while it does not in Fig~\ref{fig:Var}(a). In other words when the $5$th particle is constrained, although taking on average more time to complete, is more accurate then the latter cases, i.e.: the distribution of the exit time is narrower and the stopping time of different realizations less heterogeneous. This crossover is a general feature of the single file with a reflecting and one absorbing boundary condition.
  \begin{figure}
     \centering
     \includegraphics{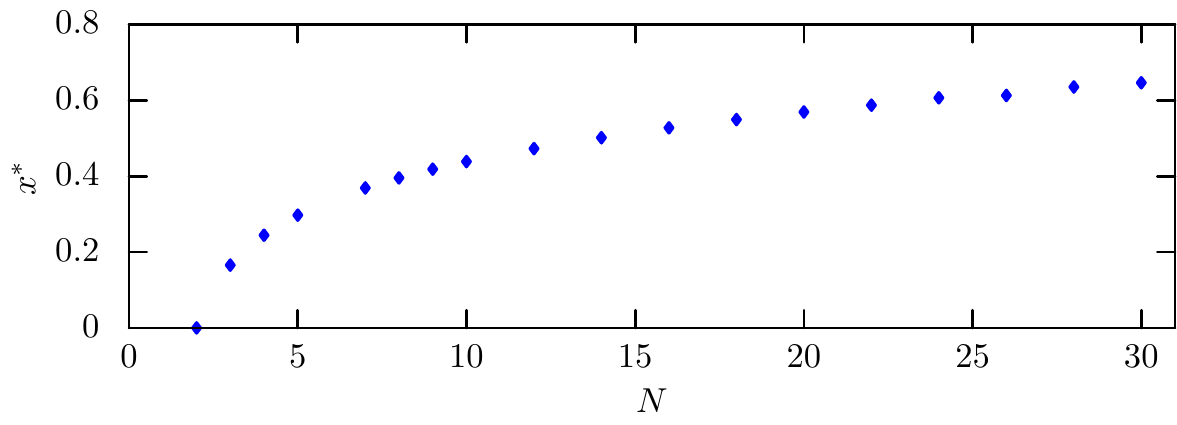}
     \caption{The intersection point $x^*$ (found using the bisection method) between the variance constraining the $N$th and the $N-1$th particle for the case with one single absorbing boundary condition for different sizes of the single file.}
     \label{fig:Intersection}
 \end{figure}
 In Fig.~\ref{fig:Intersection} we plot the intersection points $x^*$ between the variances of the instances in which the last and the second last particles are constrained for different sizes of the single files. The two particles case is the limiting one since the crossover only happens at $x=0$, on the other hand as the number of particle increases the intersection point increases rapidly initially to then slow down as the density increases. Unfortunately, for really large $N$, we cannot conclude if the limiting value is the size of the interval or an intermediate value due to computational constraints.
 
\section{The mean square displacement and the entropic forces}
In order to investigate the reason behind the crossover shown in Fig.~\ref{fig:Var}, we move from a "static" observable, like the first exit time statistic, to a "dynamical" one: the mean square displacement (MSD). The MSD is a "dynamical" observable, in the sense that the trajectory of the tagged particle must be followed over time; the first exit time is not since only the absorbing point(s) must be observed in order to obtain the necessary statistic. In addition, the MSD is a prototypical non-Markovian quantity that refers to the tagged particle and not to the entire systems~\cite{lapolla_manifestations_2019}. Therefore the exchange symmetry of the system cannot be exploited fully to simplify and speed up the computation, as in Eq.\eqref{survival weights} and can be used only partially~\cite{lapolla_bethesf_2020}.\\
The MSD is a common measure to calculate how much a particle deviates from its initial position on average. It is possible to analytically compute the MSD for a given tagged particle until the earliest particle is absorbed. Via the marginalization procedure~\eqref{marginalization}, the Green's function for the tagged particle reads~\cite{lapolla_bethesf_2020}:
\begin{equation}
    \mathcal{G}(x_{\mathcal{T}},t|x_{\mathcal{T}0})=P_0^{-1}(x_{\mathcal{T}0})\sum_\mathbf{k}V_\mathbf{k}(x_\mathcal{T})V_\mathbf{k}(x_{\mathcal{T}0})\mathrm{e}^{-\Lambda_\mathbf{k}t}.
\end{equation}
The MSD can be calculated solving the integral
\begin{equation}
    \langle(x_{\mathcal{T}}-x_{\mathcal{T}0})^2\rangle=\int_0^a \rmd x_{\mathcal{T}} (x_{\mathcal{T}}-x_{\mathcal{T}0})^2 \mathcal{G}(x_{\mathcal{T}},t|x_{\mathcal{T}0}).
\end{equation}
 Clearly the challenging part of the above equation is given by the following integrals:
 \begin{equation}
     \int_0^a \rmd x x^n V_\mathbf{k}(x).
     \label{msd integral}
 \end{equation}
 Their solution is lengthy, though it just involves elementary functions, and is presented in~\ref{MSD integrals}.\\
\begin{figure}
    \centering
    \includegraphics{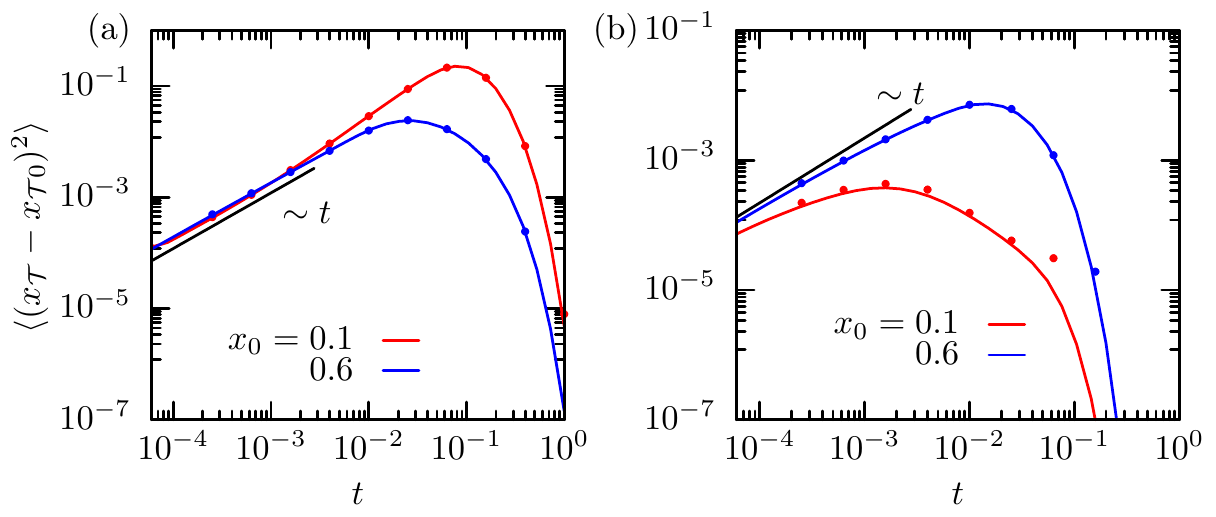}
    \caption{In the left panel we show the MSD for the last particle in a single file with $N=5$ and a single absorbing boundary condition for different initial conditions. The red lines represents the case in which the initial condition is to the left of $x^*\simeq0.3$ while the blue line the opposite case. The red line shows a faster-than-linear growth for intermediate times before the onset of the exponential decay due to the absorption events, while the increase is only linear for the blue line. This is the hallmark of the strong contribution of the entropic forces to the dynamics of the system in the former configuration. In the right panel we show the same quantity for the case in which both boundaries are absorbing. In this panel the blue line only shows a linear growth while in  the case of the red line the first particle of the single file the earliest absorption event happens so rapidly the MSD decays almost immediately. The circles show results obtained using Brownian dynamics simulations.}
    \label{fig:MSD}
\end{figure}
In Fig.~\ref{fig:MSD}(a) we focus on the initial condition dependence of the MSD of the last particle, if we consider only a single cliff this particle is surely responsible of the end of the process. When the initial condition is to the left of the crossing point $x^*$ (Fig.~\ref{fig:Var}(c)) all the particle are initially crammed in a small interval between the reflecting wall and the position of the last particle. After a very short transient in which the tagged particle does not experience any clash on average and behaves diffusively~\cite{lizana_single-file_2008}, the unfavourable entropic configuration influences the collision between particles in such a way that the last particle experiences entropic forces that restrict the number of available configurations of the system. In other words the last particle is more likely to move to the right because of the hard-core reactions. Nevertheless the dynamics we are analyzing is overdamped, therefore the forces due to momentum transfers between particles are absent, and the effect of the entropic forces does not extend to the mean first exit time. I.e. the "push" does not move the last particle to the right, it just prevents the particle to diffuse too much to the left. On the other hand if the starting condition is to the right of $x^*$, the effect of these forces in not dominant anymore, the growth of the MSD is slower, and it does not show a noticeable super-linear increase. Oppositely, if both boundary conditions are absorbing, the dramatic effect of the entropic forces completely disappears (see Fig.~\ref{fig:MSD}(b)) since both the first and the last particle can escape and the MSD is much smaller. Moreover the range of this strong effect of the entropic forces is proportional to the number of particles in the interval as it is possible to infer from Fig.~\ref{fig:Intersection}.

\section{Conclusions} 
In this article we presented analytical solutions for the first exit statistics and the MSD for a finite single file. Focusing on this diffusing system without external forces we were able to focus on the effect of the entropic forces deriving uniquely by the hard-core interaction between the particles. The analytical solutions presented here can be straightforwardly expanded to single file models diffusing in a generic external potential as well, as long as the eigendecomposition of the corresponding single particle problem is known. The most striking effect of these entropic forces is the realization of a class of initial configurations in which a trade-off between the speed and the accuracy of the first exit process can be engineered. In other words there are cases in which a configuration has a smaller first exit time with respect to a second one but it has also a larger variance. This trade-off has been shown to be relevant in first passage problem in biological settings~\cite{banerjee_elucidating_2017} and here we focused on the dependence of this phenomenon on the initial condition.  In our model this is characterized by a qualitative change in the behaviour of the MSD, in particular, the more accurate process is marked by a faster-than-linear growth of the MSD for intermediate times.
Beyond the mere mathematical techniques to obtain the observable described in this article, we think that this work could help shine new light on the trade-off between speed and accuracy in first passage time problems.

\section*{Acknowledgments}
 The author is grateful to Sidney Redner for the invaluable suggestions. The author  gratefully acknowledges financial support from NSF grant DMR-1910736.
 \appendix
 \section{The single particle problems}
  \label{single particle}
 The eigenexpansion method~\cite{gardiner_c.w._handbook_1985} is a common and powerful technique to solve Fokker-Plank equations, in our case the single particle diffusion equation reads:
 \begin{equation}
     \frac{\partial G(x,t|x_0)}{\partial t}=D\frac{\partial^2 G(x,t|x_0)}{\partial x^2},
 \end{equation}
 with localized initial conditions $G(x,0|x_0)=\delta(x-x_0)$. The solution to such equation is given by the series
 \begin{equation}
     G(x,t|x_0)=\sum_{k}\psi_k(x)\psi_k(x_0)\mathrm{e}^{-\lambda_k t},
 \end{equation}
 where the eigenfunctions and the eigenvalues are defined with the help of the boundary conditions.\\
 In the case in which both boundaries are absorbing, that is
 \begin{equation}
     G(0,t|x_0)=G(a,t|x_0)=0,
 \end{equation}
 we have that
 \begin{eqnarray}
     &\lambda_k=D\pi^2 (k+1)^2/a^2,\\
     &\psi_k(x)=\sqrt{2/a}\sin((k+1)\pi x/a).
 \end{eqnarray}
 Via direct integration it is then possible to obtain:
 \begin{eqnarray}
     &L_k(x)=\sqrt{2a}\frac{1-\cos((k+1)\pi x/a)}{(k+1)\pi},\\
     &R_k(x)=\sqrt{2a}\frac{(-1)^k-\cos((k+1)\pi x/a)}{(k+1)\pi},\\
     &\Phi_\mathbf{k}=N!\prod_{i=1}^N \sqrt{2a}\frac{1-(-1)^{k_i+1}}{(k_i+1)\pi}.
 \end{eqnarray}
 While in the case of a single absorbing boundary condition
 \begin{equation}
     \frac{\partial G(x,t|x_0)}{\partial x}\left|_0\right.=G(a,t|x_0)=0,
 \end{equation}
 the eigenfunctions and eigenvalues are
 \begin{eqnarray}
     &\lambda_k=\frac{D\pi^2(2k+1)^2}{4a^2},\\
     &\psi_k(x)=\sqrt{2/a}\cos(\frac{(2k+1)\pi x}{2a}),
 \end{eqnarray}
 and
 \begin{eqnarray}
    &L_k(x)=2\sqrt{2a}\frac{\sin((2k+1)\pi x/(2a))}{(2k+1)\pi},\\
    &R_k(x)=2\sqrt{2a}\frac{(-1)^k-\sin((2k+1)\pi x/(2a))}{(2k+1)\pi},\\
    &\Phi_\mathbf{k}=N!\prod_{i=1}^N 2\sqrt{2a}\frac{\sin((k_i+1)\pi/a)}{(k_i+1)\pi}.
 \end{eqnarray}

 \section{MSD integrals}
 \label{MSD integrals}
 The integrals presented in Eq.~\eqref{msd integral} are in principle hard to solve for a generic size $N$ of the single file. Keeping in mind Eq.~\eqref{weights}, it is clear that the integrand of Eq.~\eqref{msd integral} is composed by a monomial, two different products of the $L_k(x)$ and $R_k(x)$ functions, and the single particle eigenfunctions. In the two cases presented in this article all these functions are elementary trigonometric functions for which the generalized product-to-sum formulae (see~\ref{product-to-sum}) can be proven. These formulae can be use to convert the single integral involving a complicated finite product in Eq.~\eqref{msd integral} to a finite sum over $M$ terms of a monomial and up to three elementary trigonometric functions $t_k(x)$:
 \begin{equation}
     \int_0^a \rmd x x^n V_\mathbf{k}(x)=\sum_{j=1}^M w_j \int_0^a \rmd x x^n t_{\alpha_j}(x)t_{\beta_j}(x)t_{\gamma_j}(x),
     \label{msd expansion}
 \end{equation}
 where the weights $w_j$ take into account the prefactor for each term. The calculation of these prefactors is complicated by the fact that $L_k(x)$ and $R_k(x)$ (in the double absorbing case) and $R_k(x)$ (in the single adsorbing case) are composed by two addends (see~\ref{single particle}); however using the binomial theorem the product over these functions can be reduced to the case presented in Eq.~\eqref{msd expansion}. In last this case a integrand in Eq.~\eqref{msd expansion} presents only one or two trigonometric factors.\\
 For example, let $a=1$, $\mathbf{k}=\{k_1,k_2,k_3,k_4,k_5\}$, and let us consider the single absorbing case tagging the 3rd particle. Then the integral in Eq.~\eqref{msd expansion} reads:
 \begin{eqnarray}
    &\int_0^1 \rmd x x^n V_\mathbf{k}(x)=\nonumber\\
    &\int_0^1 \rmd x x^n \sqrt{2}\cos(\frac{(2k_3+1)\pi x}{2})\prod_{j\in\{k1,k2\}}\frac{2\sqrt{2}\sin(\frac{(2j+1)\pi x}{2})}{\pi(2j+1)}\times\nonumber\\
    &\prod_{j\in\{k4,k5\}}\frac{2\sqrt{2}((-1)^j-\sin(\frac{(2j+1)\pi x}{2}))}{\pi(2j+1)}.
    \label{with products}
 \end{eqnarray}
 The first product can be easily simplified using the product-to-sum rule and reads:
 \begin{eqnarray}
     &\prod_{j\in\{k1,k2\}}\frac{2\sqrt{2}\sin(\frac{(2j+1)\pi x}{2})}{\pi(2j+1)}=\frac{4}{\pi^2(1+2k_1)(1+2k_2)}\times\nonumber\\
     &\left(\cos(\frac{[(1+2k_1)-(1+2k_2)]\pi x}{2})-\cos(\frac{[(1+2k_1)+(1+2k_2)]\pi x}{2})\right).
     \label{expanded left}
 \end{eqnarray}
 While the second product is a bit more complicated since we first need to expand it using the binomial theorem:
 \begin{eqnarray}
     &\prod_{j\in\{k4,k5\}}\frac{2\sqrt{2}((-1)^j-\sin(\frac{(2j+1)\pi x}{2}))}{\pi(2j+1)}=\frac{(-1)^{k_4+k_5}8}{\pi^2(1+2k_4)(1+2k_5)}-\nonumber\\
     &\frac{(-1)^{k_5}8\sin(\frac{(2k_4+1)\pi x}{2})}{\pi^2(1+2k_4)(1+2k_5)}-
     \frac{(-1)^{k_4}8\sin(\frac{(2k_5+1)\pi x}{2})}{\pi^2(1+2k_4)(1+2k_5)}+\nonumber\\
     &\frac{8\sin(\frac{(2k_4+1)\pi x}{2})\sin(\frac{(2k_5+1)\pi x}{2})}{\pi^2(1+2k_4)(1+2k_5)};
 \end{eqnarray}
 and then we can apply the product-to-sum rules to each term and obtain:
 \begin{eqnarray}
     &\frac{(-1)^{k_4+k_5}8}{\pi^2(1+2k_4)(1+2k_5)}-\nonumber\\
     &\frac{(-1)^{k_5}8\sin(\frac{(2k_4+1)\pi x}{2})}{\pi^2(1+2k_4)(1+2k_5)}-
     \frac{(-1)^{k_4}8\sin(\frac{(2k_5+1)\pi x}{2})}{\pi^2(1+2k_4)(1+2k_5)}+\nonumber\\
    &\frac{4}{\pi^2(1+2k_4)(1+2k_5)}\times\nonumber\\
    &\left(\cos(\frac{[(1+2k_4)-(1+2k_5)]\pi x}{2})-\cos(\frac{[(1+2k_4)+(1+2k_5)]\pi x}{2})\right).
    \label{expanded right}
 \end{eqnarray}
 Plugging the Eqs.~\eqref{expanded left} and~\eqref{expanded right} in Eq.~\eqref{with products} and performing all the products we can rewrite the integral of the products in a sum of integrals as in Eq.~\eqref{msd expansion}. Note that some addends just present two trigonometric functions in this case, e.g.:
 \begin{equation}
     \int_0^1 \rmd x x^n \sqrt{2}\cos(\frac{(2k_3+1)\pi x}{2})\frac{4\cos(\frac{[(1+2k_1)-(1+2k_2)]\pi x}{2})}{\pi^2(1+2k_1)(1+2k_2)}\frac{(-1)^{k_4+k_5}8}{\pi^2(1+2k_4)(1+2k_5)}.
 \end{equation}
 Thus it is clear that all integrals appearing on the right side of Eq.~\eqref{msd expansion} are easily solvable by parts or with the help of a computer algebra system. Nevertheless they present several sub-cases that must be treated separately (e.g. when $\alpha_j=\beta_j=0$), it is quite involved to keep track of all the possibilities, and the final results are quite lengthy. Consequently we do not report them in the article. The interested reader can find them in the code connected with this publication~\cite{lapolla_httpsgitlabcomsantafe1singlefileabsorbing_2022}.
 Unfortunately the number $M\sim O(2^N)$ (see~\ref{product-to-sum}), henceforth the analytical computation of the MSD with this method is unfeasible for large systems for short times, and a Brownian dynamics simulation is more convenient. Conversely, for small or intermediate system sizes and/or for long times, our analytical method is superior to a computational one since our result is based on a series expansion that unfolds "backward in time", that is few addends of the series are sufficient to describe the long time behavior.
 
 \section{Generalized product-to-sum formulae}
 \label{product-to-sum}
The product-to-sum formulae or Werner formulae 
\begin{eqnarray}
  &\sin(a) \sin(b) = \frac{\cos(a - b) - \cos(a + b)}{2} \label{sin-sin}\\
  &\cos(a) \cos(b) = \frac{\cos(a - b) + \cos(a + b)}{2} \label{cos-cos}\\
  &\sin(a) \cos(b) = \frac{\sin(a + b) + \sin(a - b)}{2} \label{sin-cos}\\
  &\cos(a) \sin(b) = \frac{\sin(a + b) - \sin(a - b)}{2}. \label{cos-sin}
\end{eqnarray}
express the equivalence between the product and a sum of two trigonometric functions. In this section we present analogous formulae for arbitrary long products of sinuses and cosines.These formulae have been surely proven countless times in the history of Mathematics, however, since we were not able to find any reference for the sine formula while the cosine formula can be found in~\cite{abramowitz_milton_and_stegun_irene_a_handbook_1964}, we prove them by induction in this section for sake of completeness.\\
Let $S^N=\{1,-1\}^N$ denote the collection of all possible combinations of binary tuple of length $N$ and $e_i$ the $i$th element of one of these tuples; then we can prove the following:
\begin{theorem}
  The finite product of $N$ cosines can be rewritten as sum according to the following formula:
  \begin{equation}
    \prod_{k=1}^N\cos(\theta_k)=\frac{1}{2^N}\sum_{S^N}\cos(e_1\theta_1+\cdots+e_N\theta_N).
  \end{equation}
\end{theorem}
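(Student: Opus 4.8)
The plan is to argue by induction on $N$, using the two-factor product-to-sum identity \eqref{cos-cos} as the only nontrivial ingredient. For the base case $N=1$ one has $S^1=\{1,-1\}$, so that, since cosine is even,
\begin{equation}
  \frac{1}{2^{1}}\sum_{S^1}\cos(e_1\theta_1)=\frac{1}{2}\bigl(\cos(\theta_1)+\cos(-\theta_1)\bigr)=\cos(\theta_1),
\end{equation}
which is the asserted formula; as a further consistency check, the case $N=2$ reproduces \eqref{cos-cos} verbatim.

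For the inductive step I would assume the formula holds for some fixed $N\ge 1$ and peel off the last factor,
\begin{equation}
  \prod_{k=1}^{N+1}\cos(\theta_k)=\left(\prod_{k=1}^{N}\cos(\theta_k)\right)\cos(\theta_{N+1})=\frac{1}{2^{N}}\sum_{S^N}\cos(e_1\theta_1+\cdots+e_N\theta_N)\,\cos(\theta_{N+1}),
\end{equation}
the second equality being the induction hypothesis. Applying \eqref{cos-cos} to each summand, with $a=e_1\theta_1+\cdots+e_N\theta_N$ and $b=\theta_{N+1}$, turns that summand into
\begin{equation}
  \frac{1}{2}\bigl(\cos(e_1\theta_1+\cdots+e_N\theta_N-\theta_{N+1})+\cos(e_1\theta_1+\cdots+e_N\theta_N+\theta_{N+1})\bigr).
\end{equation}

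What remains is pure bookkeeping: the two cosines just produced are exactly the $e_{N+1}=-1$ and $e_{N+1}=+1$ members of the family $\cos(e_1\theta_1+\cdots+e_{N+1}\theta_{N+1})$, so running over all $e\in S^N$ and over both values of $e_{N+1}$ amounts to running over all of $S^{N+1}$; formally, $(e,\varepsilon)\mapsto(e_1,\dots,e_N,\varepsilon)$ is a bijection $S^N\times\{1,-1\}\to S^{N+1}$. Collecting the prefactors, $\frac{1}{2^{N}}\cdot\frac{1}{2}=\frac{1}{2^{N+1}}$, yields the stated identity at $N+1$ and closes the induction. I do not expect a genuine obstacle here; the only step that deserves care is precisely this re-indexing of the double sum, i.e.\ checking that each tuple of $S^{N+1}$ is obtained exactly once, which the displayed bijection makes transparent. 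An essentially identical induction, peeling off factors with \eqref{sin-cos} and \eqref{sin-sin} instead, would establish the companion sine formula alluded to in the surrounding text.
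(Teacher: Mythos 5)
Your proof is correct and follows essentially the same route as the paper: induction on $N$, peeling off $\cos(\theta_{N+1})$, applying the two-factor identity \eqref{cos-cos}, and reindexing $S^N\times\{1,-1\}\to S^{N+1}$. The only cosmetic difference is that the paper first rewrites \eqref{cos-cos} in a symmetrized four-term form and then collapses two equal sums, whereas you apply the two-term formula directly, which is if anything slightly tidier.
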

\begin{proof}
  The theorem can be proven by induction. For $N=1$:
  \begin{equation}
    \cos(\theta_1)=\frac{1}{2}\left(\cos(\theta_1)+\cos(-\theta_1)\right)=\cos(\theta_1).
  \end{equation}
  Then for $N=2$ we can take advantage of the parity properties of the cosine and rewrite Eq.~\eqref{cos-cos} as
  \begin{equation}
    \cos(\theta_1) \cos(\theta_2)=\frac{1}{4}\left(\cos(\theta_1+\theta_2)+\cos(-\theta_1-\theta_2)+\cos(\theta_1-\theta_2)+\cos(-\theta_1+\theta_2)\right).
    \label{parity cosine}
  \end{equation}
  If we assume that the theorem is true for $N$ we can check it for the $N+1$ case and write
\begin{eqnarray}
        &\cos(\theta_{N+1})\prod_{k=1}^N\cos(\theta_k)=\nonumber\\
      &\cos(\theta_{N+1})\frac{1}{2^N}\sum_{S^N}\cos(e_1\theta_1+\cdots+e_N\theta_N)=\nonumber\\
      &\frac{1}{2^N}\sum_{S^N}\frac{1}{4}(\cos(e_1\theta_1+\cdots+e_N\theta_N+\theta_{N+1})+\cos(-e_1\theta_1-\cdots-e_N\theta_N-\theta_{N+1})+\nonumber\\
      &\cos(e_1\theta_1+\cdots+e_N\theta_N-\theta_{N+1})+\cos(-e_1\theta_1-\cdots-e_N\theta_N+\theta_{N+1});
\end{eqnarray}
  where we used Eq.~\eqref{parity cosine} in the second passage. Now we can notice that $\cos(e_1\theta_1+\cdots+e_N\theta_N+\theta_{N+1})=\cos(-e_1\theta_1-\cdots-e_N\theta_N-\theta_{N+1})$ and $\cos(e_1\theta_1+\cdots+e_N\theta_N-\theta_{N+1})=\cos(-e_1\theta_1-\cdots-e_N\theta_N+\theta_{N+1})$ and rearrange the terms in the following fashion
  \begin{eqnarray}
      &\frac{1}{2^N}\sum_{S^N}\frac{1}{4}(\cos(e_1\theta_1+\cdots+e_N\theta_N+\theta_{N+1})+\cos(e_1\theta_1+\cdots+e_N\theta_N-\theta_{N+1}))\nonumber\\
      &+\frac{1}{2^N}\sum_{S^N}\frac{1}{4}(\cos(-e_1\theta_1-\cdots-e_N\theta_N-\theta_{N+1})+\nonumber\\
      &\cos(-e_1\theta_1-\cdots-e_N\theta_N+\theta_{N+1}),
  \end{eqnarray}
  since these two sums are equal we obtain
\begin{eqnarray}
          &\frac{1}{2^N}\sum_{S^N}\frac{1}{2}(\cos(e_1\theta_1+\cdots+e_N\theta_N+\theta_{N+1})+\cos(e_1\theta_1+\cdots+e_N\theta_N-\theta_{N+1}))=\nonumber\\
      &\frac{1}{2^{N+1}}\sum_{S^{N+1}}\cos(e_1\theta_1+\cdots+e_N\theta_N+e_{N+1}\theta_{N+1});
\end{eqnarray}
\end{proof}
Analogously we can prove a similar theorem for the sine function ($\lfloor\cdot\rfloor$ defines the floor function).
\begin{theorem}
  The finite product of $N$ sines can be rewritten as sum according to the following formula:
  \begin{equation}
    \prod_{k=1}^N\sin(\theta_k)=\frac{(-1)^{\lfloor\frac
        {N}{2}\rfloor}}{2^N}\cases{
        \sum_{S^N}\cos(e_1\theta_1+\cdots+e_N\theta_N)\prod_{j=1}^N e_j &if N even\\
        \sum_{S^N}\sin(e_1\theta_1+\cdots+e_N\theta_N)\prod_{j=1}^N e_j &if  N odd.
        }
  \end{equation}
\end{theorem}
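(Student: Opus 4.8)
The plan is to obtain this sine formula as a corollary of the cosine formula just proved, which avoids rerunning the induction. The first step is to substitute $\theta_k\mapsto\pi/2-\theta_k$ in the cosine theorem; writing $\sigma(\mathbf{e})=\sum_{k}e_k$ and $\Theta(\mathbf{e})=\sum_k e_k\theta_k$, this gives
\[
  \prod_{k=1}^N\sin\theta_k=\prod_{k=1}^N\cos\!\Big(\frac{\pi}{2}-\theta_k\Big)=\frac{1}{2^N}\sum_{\mathbf{e}\in S^N}\cos\!\Big(\frac{\pi}{2}\,\sigma(\mathbf{e})-\Theta(\mathbf{e})\Big).
\]
The second step is to note that $\sigma(\mathbf{e})$ always has the same parity as $N$, so the first addend of the cosine's argument is an integer multiple of $\pi$ when $N$ is even and an odd multiple of $\pi/2$ when $N$ is odd; expanding that cosine with the angle-subtraction identity then collapses each summand to $(-1)^{\sigma(\mathbf{e})/2}\cos\Theta(\mathbf{e})$ in the even case and to $(-1)^{(\sigma(\mathbf{e})-1)/2}\sin\Theta(\mathbf{e})$ in the odd case. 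The third step is the bookkeeping of these signs: writing $\sigma(\mathbf{e})=N-2\,\#\{k:e_k=-1\}$ turns $(-1)^{\sigma(\mathbf{e})/2}$ into $(-1)^{N/2}\prod_j e_j$ and $(-1)^{(\sigma(\mathbf{e})-1)/2}$ into $(-1)^{(N-1)/2}\prod_j e_j$, and since $\lfloor N/2\rfloor$ equals $N/2$ or $(N-1)/2$ depending on the parity of $N$, these are exactly the prefactors appearing in the two branches of the statement, so the formula follows.

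For completeness I would also record the direct inductive proof, parallel to that of the cosine theorem. The base cases $N=1$ and $N=2$ follow immediately from \eqref{sin-sin} and the parity of the cosine. In the inductive step one multiplies the hypothesis for $N$ by $\sin\theta_{N+1}$ and linearizes with $\sin(b)\cos(a)=\frac{1}{2}[\sin(a+b)-\sin(a-b)]$ when $N$ is even (the hypothesis being a sum of cosines) and with $\sin(b)\sin(a)=\frac{1}{2}[\cos(a-b)-\cos(a+b)]$ when $N$ is odd (a sum of sines); each multi-index in $S^N$ then splits into the two elements of $S^{N+1}$ obtained by appending $e_{N+1}=\pm1$, and the relation $\prod_{j=1}^N e_j=-\prod_{j=1}^{N+1}e_j$ holding when $e_{N+1}=-1$ is precisely what promotes the sign produced by the Werner formula to the required $\prod_{j=1}^{N+1}e_j$. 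A third and still shorter route is purely via Euler's formula, $\sin\theta_k=(e^{i\theta_k}-e^{-i\theta_k})/(2i)$, which upon expanding the product gives $\prod_k\sin\theta_k=(2i)^{-N}\sum_{\mathbf{e}\in S^N}\big(\prod_j e_j\big)e^{i\Theta(\mathbf{e})}$; pairing $\mathbf{e}$ with $-\mathbf{e}$ collapses the exponentials to $\cos\Theta(\mathbf{e})$ for $N$ even and, after a factor $i$ is absorbed, to $\sin\Theta(\mathbf{e})$ for $N$ odd, while $(2i)^{-N}=2^{-N}(-i)^N$ together with that pairing supplies the prefactor $(-1)^{\lfloor N/2\rfloor}/2^N$.

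In all three approaches the actual content is light and essentially the same; the one point that genuinely requires attention---and which I expect to be the sole mild obstacle---is keeping track of the global sign $(-1)^{\lfloor N/2\rfloor}$, i.e.\ checking that the extra minus that shows up in the sine$\times$sine linearization but not in the sine$\times$cosine one (equivalently, the way $\lfloor N/2\rfloor$ jumps as $N$ grows, or the parity of $(-i)^N$) always lands exactly at the transitions between the even and odd forms. Everything else is just a relabelling of the sum over $S^N$. I would present the reduction to the cosine theorem as the main proof and relegate the inductive and Euler-formula arguments to remarks.
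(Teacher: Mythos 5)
Your main argument is correct and takes a genuinely different route from the paper. The paper proves the sine identity by a second, self-contained induction on $N$, with base cases from \eqref{sin-sin} and an inductive step split according to the parity of $N$ (using \eqref{sin-sin} for $N$ odd and \eqref{sin-cos} for $N$ even), tracking the prefactor through $(-1)^{\lfloor N/2\rfloor}(-1)=(-1)^{\lfloor (N+1)/2\rfloor}$ for odd $N$ and $\lfloor N/2\rfloor=\lfloor (N+1)/2\rfloor$ for even $N$. You instead deduce the sine formula directly from the already-proved cosine theorem via the co-function substitution $\theta_k\mapsto\pi/2-\theta_k$, and your sign bookkeeping is sound: $\sigma(\mathbf{e})=N-2\,\#\{k:e_k=-1\}$ has the parity of $N$, the angle-subtraction expansion leaves only $(-1)^{\sigma/2}\cos\Theta$ (even $N$) or $(-1)^{(\sigma-1)/2}\sin\Theta$ (odd $N$), and $(-1)^{\#\{k:e_k=-1\}}=\prod_j e_j$ converts these into $(-1)^{\lfloor N/2\rfloor}\prod_j e_j$ exactly as in the statement. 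Your reduction buys economy (one induction instead of two, with the sine case obtained as a corollary) and makes transparent where the alternating prefactor comes from; the paper's direct induction is more self-contained and structurally parallel to its cosine proof, which is essentially your second, ``for completeness'' argument, and your Euler-formula route is likewise valid. No gaps; only make sure, if you present the substitution route, to state explicitly that the cosine theorem holds for arbitrary real arguments so that plugging in $\pi/2-\theta_k$ is legitimate.
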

\begin{proof}
  If $N=1$ we have
  \begin{equation}
    \sin(\theta_1)=\frac{1}{2}(\sin(\theta_1)-\sin(-\theta_1)),
  \end{equation}
  while if $N=2$
  \begin{eqnarray}
      &\sin(\theta_1)\sin(\theta_2)=\frac{1}{4}(\cos(\theta_1+\theta_2)+\cos(-\theta_1-\theta_2)-\cos(\theta_1-\theta_2)-\cos(-\theta_1+\theta_2))=\nonumber\\
      &\frac{1}{2}(\cos(\theta_1+\theta_2)-\cos(\theta_1-\theta_2))
  \end{eqnarray}
  as prescribed by equation~\eqref{sin-sin}.\\
  Now if $N$ is odd, by induction,
  \begin{eqnarray}
      &\sin(\theta_{N+1})\prod_{k=1}^N \sin(\theta_k)=\nonumber\\
      &\sin(\theta_{N+1})\frac{(-1)^{\lfloor\frac
          {N}{2}\rfloor}}{2^N} \sum_{S^N}\sin(e_1\theta_1+\cdots+e_N\theta_N)\prod_{j=1}^N e_j=\nonumber\\
      &\frac{(-1)^{\lfloor\frac
          {N}{2}\rfloor}}{2^N} \sum_{S^N}\frac{1}{2}(\cos(-e_1\theta_1-\cdots-e_N\theta_N+\theta_{N+1})\prod_{j=1}^N e_j\nonumber\\
      &-\cos(+e_1\theta_1+\cdots+e_N\theta_N+\theta_{N+1})\prod_{j=1}^N e_j)=\\
      &\frac{(-1)^{\lfloor\frac
          {N}{2}\rfloor}(-1)}{2^{N+1}} \sum_{S^N}(-\cos(e_1\theta_1+\cdots+e_N\theta_N-\theta_{N+1})\prod_{j=1}^N e_j\nonumber\\
      &+\cos(e_1\theta_1+\cdots+e_N\theta_N+\theta_{N+1})\prod_{j=1}^N e_j)
  \end{eqnarray}
  where we used Eq.~\eqref{sin-sin} and the parity of the cosine. Now noticing that $(-1)^{\lfloor\frac{N}{2}\rfloor}(-1)=(-1)^{\lfloor\frac{N+1}{2}\rfloor}$ if $N$ is odd we can prove the first part of the theorem.\\
  In a similar fashion if $N$ is even
  \begin{eqnarray}
      &\sin(\theta_{N+1})\prod_{k=1}^N \sin(\theta_k)=\nonumber\\
      &\sin(\theta_{N+1})\frac{(-1)^{\lfloor\frac{N}{2}\rfloor}}{2^N} \sum_{S^N}\cos(e_1\theta_1+\cdots+e_N\theta_N)\prod_{j=1}^N e_j=\nonumber\\
      &\frac{(-1)^{\lfloor\frac{N}{2}\rfloor}}{2^N} \sum_{S^N}\frac{1}{2}(\sin(e_1\theta_1+\cdots+e_N\theta_N+\theta_{N+1})\prod_{j=1}^N e_j+\nonumber\\
      &\sin(-e_1\theta_1-\cdots-e_N\theta_N+\theta_{N+1})\prod_{j=1}^N e_j),
    \end{eqnarray}
    where we used Eq.~\eqref{sin-cos}. Now using the odd property of the sinus and the fact that $\lfloor\frac{N}{2}\rfloor=\lfloor\frac{N+1}{2}\rfloor$ if $N$ is even, we can conclude the proof writing
    \begin{eqnarray}
        &\frac{(-1)^{\lfloor\frac
            {N}{2}\rfloor}}{2^{N+1}} \sum_{S^N}(\sin(e_1\theta_1+\cdots+e_N\theta_N+\theta_{N+1})\prod_{j=1}^N e_j-\nonumber\\
            &\sin(e_1\theta_1+\cdots+e_N\theta_N-\theta_{N+1})\prod_{j=1}^N e_j)=\nonumber\\
        &\frac{(-1)^{\lfloor\frac
            {N+1}{2}\rfloor}}{2^{N+1}} \sum_{S^N}(\sin(e_1\theta_1+\cdots+e_{N+1}\theta_{N+1})\prod_{j=1}^{N+1} e_j.
    \end{eqnarray}
\end{proof}

\section{Unconstrained starting conditions}
An easy realizable initial condition is to do not constrain any particle and just distribute them uniformly in the interval $(0,a)$ at time $t=0$.In this case the survival function is obtained integrating Eq.~\eqref{survival function} over the initial condition. Since now we are integrating over all particle we can fully exploit the exchange symmetry of the system. Hence the survival function reads:
\begin{equation}
    S(t)=\sum_{\mathbf{k}} \Phi_\mathbf{k}\Psi_\mathbf{k}(\mathbf{x}_0)\mathrm{e}^{-\Lambda_\mathbf{k}t}.
\end{equation}
The first exit time distribution can then be obtained via a simple time derivative.\\
Also the MSD can be computed analytically considering the following integral:
 \begin{equation}
    \langle(x_{\mathcal{T}}-x_{\mathcal{T}0})^2\rangle_{\mathrm{UC}}=\int_0^a \rmd x_{\mathcal{T}} \int_0^a \rmd x_{\mathcal{T}0} (x_{\mathcal{T}}-x_{\mathcal{T}0})^2 \mathcal{G}(x_{\mathcal{T}},t|x_{\mathcal{T}0})P_0(x_{\mathcal{T}0}).
\end{equation}
That can be solved using the method described in~\ref{MSD integrals}. The observables for this problem have been implemented in~\cite{lapolla_httpsgitlabcomsantafe1singlefileabsorbing_2022}.

\pagebreak 
\bibliographystyle{unsrt}
\bibliography{main.bib}

\end{document}